\def\ZZ{{\mathbb Z}}
\def\oo{\infty}
\def\g{\gamma}
\def\G{\Gamma}
\def\a{\alpha}
\def\b{\beta}
\def\d{\delta}
\def\l{\lambda}
\def\es{\emptyset}
\renewcommand{\k}{\kappa}
\newcommand{\one}{\hbox{\rm 1\kern-.27em I}}
\newtheorem{theorem}{Theorem}[section]
\newtheorem{proposition}[theorem]{Proposition}
\theoremstyle{definition}
\newtheorem{definition}[theorem]{Definition}
\newtheorem{example}[theorem]{Example}
\newtheorem{remark}[theorem]{Remark}
\author[Bj\"ornberg]{J. E. Bj\"ornberg}
\author[Britton]{T. Britton}
\author[Broman]{E. I. Broman}
\thanks{Corresponding author: E. I. Broman, Dept.\ of Mathematics,
Uppsala University, Box 256, 751 05 Uppsala, Sweden, 
Phone 0046737320791,
e-mail: broman@math.uu.se}
\author[Natan]{E. Natan}
\date{\today}
\title{A stochastic model for virus growth in a cell population}
\begin{document}
\maketitle

\begin{abstract}
This work introduces a stochastic model for the spread 
of a virus in a cell population where the virus has two ways of 
spreading: either by allowing
its host cell to live on and duplicate, or else by multiplying in
large numbers within the host cell, causing the host cell to
burst thereby letting the viruses  enter new uninfected 
cells. The model is a kind of interacting
Markov branching process.  We focus in particular the probability 
that the virus population survives and how this depends on a 
certain parameter $\l$ which quantifies the `aggressiveness'
of the virus.

Our main goal is to determine the
optimal balance between aggressive growth and long-term
success.  Our analysis shows that the
optimal strategy of the virus (in terms of survival) is obtained when
the virus has no effect on the host cell's life-cycle,
corresponding to $\l=0$.  This is
in agreement with experimental data about real viruses.
\end{abstract}

\noindent {\em Keywords:} Branching Processes, Interacting Branching Processes, Model for Virus Growth \\

\noindent {\em Subject Classification} Primary: 60J80, 60J85 Secondary:  60J27, 60J28, 92D15

\section{Introduction}
\label{intro_sec}

A virus is a simple parasitic
organism consisting of compacted genetic material in a
protein or lipid vessel.  Viruses prey on living cells,
such as bacterial or human cells, by penetrating the
membrane of the cell and transferring their genetic material into the host.
In order to multiply, the virus has two basic possibilities.
The first option 
is for the virus to temporarily
incorporate its genetic
material in the host genome, and thereby be passively
replicated along with the latter.  
The other option 
is to seize the host's replication machinery
and aggressively replicate, thereafter releasing its progeny
in the surrounding medium.  The
`free virions' must then attach
to new host cells within a short time in order to survive.
For many viruses, this process necessarily involves
bursting the host cell, thereby killing it.

The technical term for the event that a virus
bursts its host cell is \emph{lysis}, and one says
that the virus \emph{lyses} the host cell.  A virus which
is incorporated into, and passivlely replicated along with,
the host genome is said to be in the \emph{lysogenic state},
or to employ the \emph{lysogenic strategy}.
Sometimes one speaks loosely of `the lytic strategy'
to denote that a virus `becomes
lytic', that is to say
actively lyses the host cell.  
A lysogenic virus will eventuelly become lytic;
in fact, it is well-established
experimentally~\cite{lwoff53} 
that viruses in the lysogenic state
will revert to the lytic strategy if the host cell
is under stress and in danger of dying, 
enabling the virus to find a `safer' host.

We introduce a stochastic model to investigate this behaviour.  The
model is a two-dimensional Markov process $(X(t),Y(t))_{t\geq 0}$, where
$X(t)$ is the number of `healthy cells' at time $t$, and
$Y(t)$ is the number of `infected cells' (i.e.\ cells
having virus in them). Both components 
$(X(t))_{t\geq0}$  and $(Y(t))_{t\geq0}$ behave in many ways like
branching processes, although there are dependencies between them.
A healthy cell is replaced by a random number of new healthy
cells at rate 1.  This random number is independent of
other events and drawn from a distribution
$(p_k)_{k\geq0}$.  Infected cells behave similarly, although they
are replaced by $k$ new cells at rate $p_k$ if $k\geq 1,$ 
while they are replaced by 0
new cells (die) at the higher rate $p_0+\lambda$. 
Here $\lambda\geq 0$ is a parameter
that reflects the negative impact of the virus on the hosts
lifelength.  When an infected cell dies (i.e.\! is replaced
by 0 new cells), it bursts (lyses) and releases `free virions'.
These free virions immediately enter a random number of 
healthy cells, thus converting them into infected cells. 
The number of new infections is independent of all other
events, and is drawn from a distribution $(\g_k)_{k\geq0}$.
The model is defined in detail in
Section~\ref{def_sec}.

We are concerned with a fundamental question about the
virus' reproductive strategy, namely:  what is the optimal
level of `aggressiveness' (balance of lysis to lysogeny)
from the point of view of the virus?
Here we interpret `optimality' as maximizing the chance of
the virus establishing itself in the cell population and, 
ultimatly, surviving in the long-term.
We therefore study the \emph{extinction probability} $\eta$ of the
infected process $(Y(t))_{t\geq 0}$
(see Definition~\ref{eta_def}).  We are interested in $\eta$, 
or rather $1-\eta$, as an indicator
of the `fitness' of the virus, and are mainly
concerned with how it depends on $\l$.  This is because $\l$
governs the relative rate of lysis events, and is thus
a measure of 
the `level of aggressiveness' of the virus.

For the experimentally well-studied virus Lambda,
the lysogenic state appears overwhelmingly stable.
Once in the lysogenic (dormant) state, it has been
found very unlikely to spontaneously switch to the lytic
state~\cite{aurell_sneppen,little99}:
a spontaneous
transition to the lytic state occurs about once
in $10^7$
generations~\cite{aurell_sneppen}.
This is lower than the mutation rate of the 
incorporated viral
genome, which is once in $10^6$ to $10^7$
generations~\cite{little99}.
It is natural to
ask if this lysogenic stability 
is an advantage for the success of the virus infection?
For the virus Lambda, a choice between lytic and lysogenic
also occurs \emph{at} the moment of infection.
We focus mainly on the virus'
decision \emph{after} it has been incorporated in the
host genome, but in Section~\ref{imm_sec} 
briefly deal also with the decision at the
moment of infection.

Our model is of course a simplification of real virus populations.
Indeed, we make the following basic simplifications: life-lengths of
cells are assumed to be independent and exponentially distributed,
spatial separation and locations of cells are not taken into account,
and the number of new infections caused by one infected cell is
assumed not to vary with the population sizes $(X(t))_{t\geq 0}$ and
$(Y(t))_{t\geq 0}$ (except if \emph{all} remaining healthy cells are
to become infected).
The advantage of making such simplifications is that a detailed and
rigorous analysis can be performed, hopefully
highlighting general principles that can then form the basis
for more realistic modelling.  

It is well-known \cite{athreya_ney,haccou_jagers} that a
branching process either dies out, or grows exponentially fast for all
time.  Thus a branching process 
is not a realistic long-term model for population size, in light
of the limited resources in the real world.  
Instead, we see the surivival probability $1-\eta$
as an indicator of
the probability that a virus population establishes itself
in a population of healthy cells in the first place. 
In this sense our model is primarily relevant for the
early stages of a virus infection and the competition between
two growing populations. 
Since our model is concerned with qualitative
properties of reproductive strategies, not with numerical estimates of
population size, we do not see the use of branching processes as a
limitation.  In what follows we will refer to $1-\eta$
as the `survival probability' as this is the appropriate
term in the context of branching process theory, bearing in
mind that when interpreting our results in terms of real
viruses one should rather think of $1-\eta$ as an
indicator of the relative success in establishment and
proliferation.

Recall the concept of stochastic ordering of probability vectors:
if $\pi=(\pi_k)_{k\geq0}$ and $\pi'=(\pi'_k)_{k\geq0}$ are
probability vectors then we say that $\pi'$ is
\emph{stochastically larger} than $\pi$ if
\[
\sum_{j\geq k}\pi'_j\geq \sum_{j\geq k}\pi_j
\]
for all $k\geq 0$.  We denote this by $\pi\preceq \pi'$.
The following is the main result of this paper.
\begin{theorem}\label{mono_thm}
For $\l\geq0,$ $\g_0=0$ and any starting conditions $X(0)\geq 1,Y(0)\geq 1,$ 
we have that $\eta=\eta(\l,(\g_k)_{k\geq0})$
is monotonically increasing in $\l$ and $(\g_k)_{k\geq0}$.
\end{theorem}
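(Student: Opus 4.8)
The plan is to prove the two monotonicity statements by a single coupling argument, reducing everything to a pairwise comparison. Fix two parameter pairs $(\l,(\g_k))$ and $(\l',(\g'_k))$ with $\l\le\l'$ and $(\g_k)\preceq(\g'_k)$, and write $(X,Y)$ and $(X',Y')$ for the corresponding processes, both started from the same initial condition; this covers monotonicity in $\l$ (take $(\g_k)=(\g'_k)$) and in $(\g_k)$ (take $\l=\l'$) simultaneously. I would realise both on one probability space by a graphical (Poisson) construction: each cell carries independent clocks governing reproduction and division according to $(p_k)$ — these rates are common to both processes, so the clocks are shared — while the burst clock of an infected cell rings at rate $p_0+\l$ in the unprimed process and at the larger rate $p_0+\l'$ in the primed one, so that every unprimed burst is accompanied by a primed burst plus extra primed-only bursts at rate $\l'-\l$. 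At each burst the number of newly infected cells is drawn using the monotone coupling of $(\g_k)\preceq(\g'_k)$, giving $N\le N'$ simultaneously. The aim is then to show $\{Y'\text{ survives}\}\subseteq\{Y\text{ survives}\}$, which yields $\eta\le\eta'$.

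The argument rests on a structural observation that exploits the hypothesis $\g_0=0$. Because a burst occurring while at least one healthy cell remains draws a number $\ge 1$ from $(\g_k)$ and hence creates at least one new infected cell, the net change of $Y$ at such a burst is $N-1\ge 0$; together with the fact that divisions change $Y$ by $k-1\ge0$, this shows $Y(t)$ is non-decreasing as long as $X(t)\ge1$. Consequently $Y$ can decrease only at bursts occurring when $X(t)=0$. Moreover, once $X$ reaches $0$ it is absorbed there (no healthy cells remain to reproduce and no further infections are possible), and from that moment $(Y(t))$ evolves as an autonomous Markov branching process in which a particle is replaced by $k\ge1$ particles at rate $p_k$ and dies at rate $p_0+\l$, i.e.\ with one-particle offspring mean $m/(1+\l)$ where $m=\sum_k kp_k$. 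Writing $\tau=\inf\{t:X(t)=0\}$, it follows that
\[
\{Y\text{ survives}\}=\{\tau=\oo\}\cup\bigl(\{\tau<\oo\}\cap\{\text{the branching process started at }Y(\tau)\text{ survives}\}\bigr),
\]
and similarly for the primed process.

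Next I would use the coupling to dominate the healthy populations, $X(t)\ge X'(t)$ for all $t$: the milder virus both bursts less often and infects fewer cells per burst, so it retains more healthy cells, and one checks that each transition preserves the inequality (the only bookkeeping needed is an identification of cells across the two processes under the shared reproduction clocks). This immediately handles the first survival mode: if the aggressive virus survives because $X'$ never vanishes (so $\tau'=\oo$), then $X\ge X'$ forces $\tau=\oo$ as well, and the mild virus survives \emph{a fortiori}.

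The main obstacle is the second mode, where the aggressive process depletes its fuel at some $\tau'<\oo$ and survives through the ensuing branching phase. Here no pathwise domination of the infected populations is available: the extra infections first make $Y'$ larger, but they deplete $X'$ sooner, after which $Y'$ enters its dying branching phase while $Y$ is still growing, so the two infected trajectories genuinely cross. To close the argument I would compare the two post-depletion branching processes, using that the smaller death rate $p_0+\l\le p_0+\l'$ makes survival monotone (the branching extinction probability is increasing in the death rate and decreasing in the initial population), and then control the infected counts at the respective depletion times $\tau'\le\tau$ via the coupling so that survival of the aggressive branching phase forces survival of the mild one. Reconciling the distinct depletion times with the crossing of $Y$ and $Y'$ is the delicate point; the monotonicity in $(\g_k)$ at fixed $\l$ is the cleanest instance, since there the two post-depletion branching processes are identical and only the hastening of fuel depletion has to be tracked.
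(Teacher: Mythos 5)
Your reduction and structural observations are exactly the paper's own starting point: the pairwise coupling with $\G^{(1)}\leq\G^{(2)}$, the fact that $\g_0=0$ forces $Y$ to be non-decreasing while $X\geq1$, the absorption of $X$ at $0$, and the decomposition of survival at the depletion time. (In fact your first survival mode is essentially vacuous: by Proposition~\ref{takeover_prop}, $\g_0=0$ gives $\tau<\infty$ almost surely, barring the degenerate case $p_0=\lambda=0$.) The genuine gap is exactly where you flag ``the delicate point'': you never prove that survival of the aggressive post-depletion phase forces survival of the mild one, and the route you sketch does not close. Comparing the branching phases ``at the respective depletion times $\tau'\leq\tau$'' is not the right comparison: on $[\tau',\tau]$ the primed infected population is already a freely evolving branching process (it may grow), while $Y$ and $Y'$ admit no pathwise ordering there, so neither $Y(\tau)\geq Y'(\tau')$ nor $Y(\tau)\geq Y'(\tau)$ follows from anything you have established. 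Monotonicity of branching-process extinction in the death rate and the initial count cannot be invoked until you have a domination of the infected counts at a \emph{common} time after both depletions, and producing that domination is the entire content of the theorem.

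The idea you are missing is the invariant around which the paper builds its coupling (Theorem~\ref{coupling_thm}): twin every cell of the aggressive process with a cell of the mild process (mild cells may be untwinned; aggressive ones never are), so that reproductions are shared between twins, infections never change total counts, and the extra lyses of the aggressive process only delete cells from it. This yields $X(t)+Y(t)\geq X'(t)+Y'(t)$ for \emph{all} $t\geq0$ --- not merely an ordering of the healthy populations, or of the infected populations before depletion. Keeping the marginal dynamics correct requires nontrivial bookkeeping (pairs consisting of a mild healthy cell twinned with an aggressive infected cell; re-twinning the partner of a lysed aggressive cell with a newly infected cell; a separate regime after the aggressive depletion time), which your ``shared clocks'' description glosses over: burst clocks are attached to infected cells, whose numbers differ and cross between the two processes, so the identification of cells across the processes must be dynamic. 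Once the total-population invariant is in hand, the theorem is immediate and needs no post-depletion branching comparison at all: for $t>\tau$ (and hence $t>\tau'$) both healthy populations vanish, the invariant reads $Y(t)\geq Y'(t)$, and therefore $\{Y'\text{ survives}\}\subseteq\{Y\text{ survives}\}$. Your weaker invariant $X(t)\geq X'(t)$, even granted, is insufficient to finish.
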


Thus, loosely speaking, the virus maximizes its
survival probability by being as \emph{passive} as possible 
(i.e. when $\lambda=0$). 
This is in agreement with the observed stability of the
lysogenic state for real viruses, see 
Section~\ref{disc_sec}.
However, the full details of how the `fitness' $1-\eta$
depends on $\l$ are complex, and depend on the other parameters
 of the process.  
Furthermore, simulations and heuristic
arguments suggest that 
monotonicity in $\l$ may hold under weaker assumptions
(Example~\ref{mono_ex} and Section~\ref{gc_sec}) than in Theorem \ref{mono_thm}, 
but interestingly $\eta$ is
\emph{not} monotone in $\l$ for all choices of the other
parameters (Proposition~\ref{exprop}).  We have not
been able to find
a counterexample to optimaly of $\l=0$.

Thus this paper highlights the principle 
that in order to achieve long-term survival it may be better to `be kind'
to your host environment, even if this hampers your short-term expansion.
We do not aim to give a complete and final picture, however, and there
are many interesting questions and research directions that fall outside
the scope of the current paper. For instance, in \cite{BB}, the rigorous
mathematical treatment of the model will be continued as will be explained
in more detailed in later sections. 
Other possible directions include studying
real life data, and in the cases when a rigorous mathematical treatment is
unfeasible, use simulation studies to connect the proposed model to these data.

Most previous models studying the spread of viruses are non-stochastic
and formulated in terms of differential equations.  A notable example
is that of~\cite{nowak_may}.   We prefer to
formulate our model in microscopic terms, deducing macroscopic
properties explicitly from our assumptions about the interactions of
the particles involved. To our knowledge the current model has not
been studied before but stochastic models of similar `nature' appear
for example in predator-prey 
models~\cite{renshaw} and epidemic models, in particular
models for competing 
epidemics~\cite{kendall_saunders}.

\vspace{.3cm}
\noindent{\bf Acknowledgements}:
The authors would like to thank Leonid Hanin for 
helpful comments on a draft of this article.

\section{The model}

\subsection{Definition}\label{def_sec}
Let $(p_k)_{k\geq0}$ and $(\g_k)_{k\geq0}$ be probability
distributions on the nonnegative integers, and let $\l\geq0$.
We assume throughout that the means $\sum_{k\geq0}kp_k$ and
$\sum_{k\geq0}k\g_k$ are finite.
We exclude the (degenerate) case when $p_1=1$;  in fact the reader
may for convenience assume that $p_1=0$, since this only amounts to a 
time-change.

The continuous--time Markov chain $(X(t),Y(t))_{t\geq 0}$, 
taking values in $\ZZ^2_+$, was  informally described
in Section~\ref{intro_sec}.  To recapitulate the main points,
each healthy cell is replaced by $k\geq0$ new healthy cells
at rate $p_k$.  Being replaced by $k=0$ new cells corresponds
to dying.  Each infected cell is replaced  by $k\geq1$
new infected cells at rate $p_k$.  When an infected cell
dies, which occurs at rate $p_0+\l$, a random number
of healthy cells are converted into infected cells.
If $t$ is the time of such an event,
we draw a random variable $\Gamma_t$ from the distribution $(\g_k)_{k\geq0}$
independently of other events. If $\Gamma_t\leq X(t)$ we simply declare 
$\Gamma_t$ of previously healthy cells to be infected, while if
$\Gamma_t > X(t)$ we declare all previously healthy cells to be infected.
To define this process formally, we list the 
different possible jumps in Table~\ref{rate_tab}.
\begin{table}[hbt]
\centering
\begin{tabular}{l|l|l|l}
& Transition & Rate \\ \hline
(i) & For $k\geq 0$: & \\
 & $(x,y)\rightarrow (x+k-1,y)$ & $x p_k$\\
(ii) & For $k\geq 1$: & \\
 & $(x,y)\rightarrow (x,y+k-1)$ & $y p_k$\\
(iii) & For $k\geq 0$: & \\
& $(x,y)\rightarrow (x-(x\wedge k)),y-1+(x\wedge k))$ &
$y(p_0+\l)\g_k$
\end{tabular}
\caption{Transition rates for the process $(X(t),Y(t))_{t\geq0}$,
  valid for $x,y\geq0$.}
\label{rate_tab}
\end{table}

Note that, for certain combinations of $x$, $y$ and $k$,
the same transition occurs mutliple times in Table~\ref{rate_tab}.
The correct interpretation is to add the corresponding rates.  For 
example, the transition $(0,y)\mapsto(0,y-1)$
occurs at rate $y(p_0+\l)=\sum_{k\geq 0}y(p_0+\l)\g_k$.

To avoid trivial cases, we assume
throughout that $X(0),Y(0)\geq 1.$  Biologically it might
be most relevant to consider the case when $p_k=0$ for
$k\geq 3$, but none of our results depend on any special
assumptions about $(p_k)_{k\geq0}$ so we will consider
general distributions.

We now state some immediate properties of the model.  If it were the
case that $Y(t)=0$, then healthy cells would evolve as a Markov
branching process, with intensity $1$ and offspring distribution
$(p_k)_{k\geq0}$. 
Similarly, if $X(t)=0$ for some $t,$ then $(Y(t+s))_{s\geq 0}$ would
behave like a Markov branching process with the higher intensity
$(1+\lambda)$ and an offspring distribution $(p'_k)_{k\geq0}$
derived from $(p_k)_{k\geq 0}$ by placing more mass
on $k=0$ (see~\eqref{p'_eq} below).  When
both components are positive, as transition rate (iii) tells us, then
healthy cells may turn into infected cells. This scenario hence
`helps' the process $(Y(t))_{t\geq 0}$ and 
`hurts' the process $(X(t))_{t\geq 0}$.

Note that 
the virus is assumed not to change the offspring distribution of
\emph{surviving} cells.  This, together with the
increased mortality rate of infected cells, determines the
form of the transition rates above, see~\eqref{xi'_eq}.
Also                            
note that the random number drawn from the distribution $(\g_k)_{k\geq 0}$
is the number of \emph{new infections} due to a lysis event,
rather
than the number of `free virions'.
In the present work we consider only this
simplified formulation, leaving more realistic modifications
for future work.

\subsection{The extinction probability}
\label{ext_sec}

As explained in Section~\ref{intro_sec},
we view the extinction probability of the process $(Y(t))_{t\geq 0}$
as an indicator of the fitness of the virus:
\begin{definition}\label{eta_def}
Let
$\eta=\lim_{t\rightarrow\oo}P(T(t)=0)=P(Y(t)\rightarrow0)$ denote the extinction probability of
the process $(Y(t))_{t\geq 0}$.
\end{definition}
Thus `small' $\eta$ corresponds to `high fitness'.
Note that $\eta$ is a function of the parameters
$(p_k)_{k\geq0}$, $(\g_k)_{k\geq0}$, $\l$, $X(0)$ and $Y(0)$.  
For the reasons given in Section~\ref{intro_sec} we are
mainly interested in how $\eta$ depends on $\l$
and the distribution $(\g_k)_{k\geq0}$.

\begin{remark}
From a purely mathematical point of view the allowed range of
values of $\l$ is $\l\geq-p_0$.  It is not
hard to see that $\eta(-p_0)=0$ and $\eta(\lambda)\to 1$ as
$\lambda\to\infty$.
However, viruses being parasites, from a
biological point of view it seems unlikely that  infected cells
should have \emph{longer} life-length than healthy
cells.  Throughout the
rest of the paper we will therefore assume that $\lambda \geq 0$.
\end{remark}

\section{Additional results} \label{addres_sec}


Let $\xi$ be a random variable with
distribution $(p_k)_{k\geq0}$.  As mentioned above, if $X(t)=0$ for some $t,$
then from that time onwards, the process $(Y(t+s))_{s\geq 0}$ is a standard
branching process.  Its intensity is then $1+\l$ and
its offspring distribution $(p'_k)_{k\geq0}$ is given by:
\begin{equation}\label{p'_eq}
p'_0=(p_0+\l)/(1+\l) \mbox{ and }
p'_k=p_k/(1+\l) \mbox{ for } k\geq 1.
\end{equation}  
Let $\xi'$
be a  random variable with distribution $(p'_k)_{k\geq0}$, and note that
for $\l>0$, $p'_0=P(\xi'=0)>P(\xi=0)=p_0$, whereas
\begin{equation}\label{xi'_eq}
P(\xi'=k\mid\xi'\neq0)=P(\xi=k\mid\xi\neq0) \mbox{ for all } k\geq 1.
\end{equation}
This choice of $(p'_k)_{k\geq0}$ 
is the only one, given $\lambda,$ such that
the intensity at which a cell gives birth 
to $k\geq 1$ new cells is the same for
both $(X(t))_{t\geq 0}$ and $(Y(t))_{t\geq 0}.$

Let $\G$ be a random variable independent of $\xi'$,
with distribution $(\g_k)_{k\geq0}$.  Write
\begin{equation} \label{eqn1}
\psi=\xi'+\G\cdot\one\{\xi'=0\}.
\end{equation}
Then $\psi$ has distribution $(q_k)_{k\geq0}$, where
\begin{equation}\label{q_eq}
q_0=\frac{\g_0(p_0+\l)}{1+\l},\mbox{ and }
q_k=\frac{p_k+\g_k(p_0+\l)}{1+\l}\mbox{ for }k\geq 1.
\end{equation}
Write
\begin{equation}\label{t_def}
T_X:=\inf\{t\geq0:X(t)=0\}
\end{equation}
for the (possibly infinite) time when the healthy population
becomes extinct.
The following summarizes some of the previous discussion:
\begin{proposition}\label{br_prop}
\hspace{1cm}
\begin{enumerate}
\item If $Y(t)=0$ for some $t\geq0$, 
 then $(X(t+s))_{s\geq 0}$
is a Markov branching process with
intensity 1 and offspring distribution $(p_k)_{k\geq0}$;
\item If $X(t)=0$  for some $t\geq0$,
then $(Y(t+s))_{s\geq 0}$
is a Markov branching process with
intensity $1+\l$ and offspring distribution $(p'_k)_{k\geq0}$;
\item The process $(Y(t))_{0\leq t<T_X}$ is a
(stopped) Markov branching process with
intensity $1+\l$ and offspring distribution $(q_k)_{k\geq0}$.
\end{enumerate}
\end{proposition}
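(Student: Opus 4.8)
The plan is to handle the three statements separately, reading the first two directly off Table~\ref{rate_tab} and establishing the third by a coupling. For~(1), both type-(ii) and type-(iii) transitions occur at rates proportional to $y$, so once $Y(t)=0$ these rates vanish and $Y$ stays absorbed at $0$; the only surviving transitions are the type-(i) ones, $(x,0)\mapsto(x+k-1,0)$ at rate $xp_k$, which (since $\sum_k p_k=1$) is exactly the generator of a Markov branching process of intensity $1$ with offspring law $(p_k)_{k\geq0}$. For~(2) the argument is symmetric: when $X(t)=0$ the type-(i) rates $xp_k$ vanish and the type-(iii) transitions leave $X$ unchanged because $x\wedge k=0$, so $X$ stays absorbed at $0$; the remaining transitions are $(0,y)\mapsto(0,y+k-1)$ at rate $yp_k$ for $k\geq1$ and $(0,y)\mapsto(0,y-1)$ at total rate $y(p_0+\l)\sum_{k}\g_k=y(p_0+\l)$. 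Reading off the per-particle rates gives total intensity $\sum_{k\geq1}p_k+(p_0+\l)=1+\l$ and offspring law $p_k/(1+\l)$ for $k\geq1$ and $(p_0+\l)/(1+\l)$ for $k=0$, which is precisely $(p'_k)_{k\geq0}$ from~\eqref{p'_eq}.

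For~(3) the subtlety is that the net change of $Y$ in a type-(iii) transition is $-1+(x\wedge k)$, which depends on the current value of $X$, so $Y$ on its own is not Markov. The key observation I would isolate is that this truncation is invisible before $T_X$: the factor $x\wedge k$ differs from $k$ exactly when $k>x$, and in that case the same transition sends $X$ to $x-(x\wedge k)=0$, i.e.\ it occurs at time $T_X$. Hence for every $t<T_X$ every type-(iii) event experienced so far had its drawn value $\G$ strictly below the current $X$, so $x\wedge\G=\G$ and the net change of $Y$ at that event equals $\G-1$, exactly as for an untruncated lysis.

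Granting this, I would build on one probability space the joint process $(X,Y)$ together with a Markov branching process $(\tilde Y(t))_{t\geq0}$ of intensity $1+\l$ and offspring law $(q_k)_{k\geq0}$, driven by shared clocks and offspring variables with $\tilde Y(0)=Y(0)$: each infected particle rings at rate $1+\l$ and draws $\xi'$ (and, if $\xi'=0$, an independent $\G$); in $\tilde Y$ it is replaced by $\psi=\xi'+\G\cdot\one\{\xi'=0\}$ offspring as in~\eqref{eqn1}, while in $Y$ a draw $\xi'\geq1$ produces $\xi'$ infected offspring via transition~(ii) and a draw $\xi'=0$ produces $x\wedge\G$ infected offspring via transition~(iii). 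The elementary computation already recorded in~\eqref{q_eq} confirms that $\psi$ has law $(q_k)_{k\geq0}$ with total rate $1+\l$, so $\tilde Y$ is the asserted branching process. By the previous paragraph the two prescriptions agree at every event occurring strictly before $T_X$, whence $Y(t)=\tilde Y(t)$ for all $t<T_X$; since $T_X$ is a stopping time for the joint filtration, this identifies $(Y(t))_{0\leq t<T_X}$ with $(\tilde Y(t))$ stopped at $T_X$.

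The main obstacle is precisely the bookkeeping around the $x\wedge k$ truncation in~(3): one must check carefully that truncation and the event $\{X\to0\}$ coincide, so that the coupling cannot break before $T_X$, and that the shared-randomness construction simultaneously reproduces the joint dynamics of Table~\ref{rate_tab} and the branching dynamics of $(q_k)_{k\geq0}$ under the decomposition $\psi=\xi'+\G\cdot\one\{\xi'=0\}$. Parts~(1) and~(2) are routine once the surviving transitions are read off the table.
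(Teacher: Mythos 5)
Your proposal is correct and follows essentially the same route as the paper, which states Proposition~\ref{br_prop} as a summary of the preceding discussion: parts (1) and (2) are read off Table~\ref{rate_tab}, and part (3) rests on the decomposition $\psi=\xi'+\G\cdot\one\{\xi'=0\}$ of~\eqref{eqn1} with law $(q_k)_{k\geq0}$ as in~\eqref{q_eq}. Your explicit observation that the truncation $x\wedge k$ can only bind at the very event that produces $X=0$ (hence exactly at $T_X$), together with the shared-clock coupling, is simply a careful formalization of what the paper leaves implicit.
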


For proofs of the following basic facts about branching
processes, see~\cite{athreya_ney,haccou_jagers,harris63}.
Consider an arbitrary Markov branching process $(W(t))_{t\geq 0}$ 
with lifelength intensity $a$, and offspring distribution 
$(z_k)_{k\geq0}$ such that the mean $\sum_{k\geq0}kz_k$ is finite.
Let $Z$ be a random variable
with distribution $(z_k)_{k\geq 0}$.  The number
\[
\mu=a\cdot (E(Z)-1),
\]
is called the
\emph{Malthusian parameter} of the process $(W(t))_{t\geq 0}$.
Let $A=\big\{W(t)=0\mbox{ for some }t\geq 0\big\}$ be the event 
of extinction.  It is well-known that $P(A)=1$ if $\mu\leq0$.
If $\mu>0$ then 
\[
\lim_{t\rightarrow\oo}\frac{\log W(t)}{t}=\mu,
\mbox{ almost surely on the complement $A^c$}.
\]

Write $\a$ and $\b$ for the Malthusian parameters of  branching
processes with respective  intensities $1$ and $1+\l$, and
offspring distributions $(p_k)_{k\geq0}$ and $(q_k)_{k\geq0}$, as in parts (1) and (3)
of Proposition~\ref{br_prop}.  Thus $\a$ is the parameter for the
uninfected population in the absence of infected cells, and $\b$ for
the infected population in the presence of a very large uninfected
population.  We have that
\begin{equation}
\a=E(\xi) -1,
\qquad
\b=(1+\l)(E(\psi)-1), \label{alpha_beta}
\end{equation}
where $\xi$ and $\psi$ are as above.
Using (\ref{eqn1}), we find that
\begin{equation}\label{beta_alpha}
\b=\a+p_0E(\G)+\l(E(\G)-1).
\end{equation}

\begin{proposition}\label{takeover_prop}
Suppose $p_0>0$ or $\l>0$.
Then $P(T_X<\infty)=1$ if and only if either
$\a\leq0$ or $\g_0=0$.
\end{proposition}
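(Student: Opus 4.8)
My plan is to prove the two implications separately, using throughout that $X=0$ is absorbing for the first coordinate (every transition increasing $X$ is of type (i) and has rate proportional to $x$), so that $\{T_X<\infty\}$ coincides with $\{X(t)\to0\}$. For the \emph{sufficiency} of $\a\le0$, I would couple $(X(t))$ with a pure Markov branching process $(\tilde X(t))$ of intensity $1$ and offspring $(p_k)_{k\ge0}$ started from $X(0)$: let $\tilde X$ carry all healthy births, and at each lysis simply mark the converted particle together with its descendants as `removed'. Then $X(t)\le\tilde X(t)$ pathwise, and since $\a\le0$ and $p_1\ne1$ the process $\tilde X$ is (sub)critical and dies out a.s., hence so does $X$. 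Note this case uses neither $p_0>0$ nor $\l>0$, consistent with the earlier Remark.

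For the sufficiency of $\g_0=0$ I may, by the previous case, assume $\a>0$. Then $q_0=0$ by \eqref{q_eq}, so Proposition~\ref{br_prop}(3) tells us that $(Y(t))_{0\le t<T_X}$ is a branching process that can never become extinct; moreover $\b\ge\a>0$ by \eqref{beta_alpha} (as $\g_0=0$ forces $E(\G)\ge1$). Hence on $\{T_X=\infty\}$ this process is a supercritical branching process with $\log Y(t)/t\to\b$. The main device is the elementary generator identity $\mathcal{L}(X+Y)=\a X+(\a-\l)Y\le\a(X+Y)$ (valid for any $\l\ge0$, since the $E(X\wedge\G)$–terms from rates (iii) cancel), which shows that $e^{-\a t}(X(t)+Y(t))$ is a nonnegative supermartingale and therefore converges a.s.\ to a finite limit. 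When $\b>\a$ this is immediately contradictory on $\{T_X=\infty\}$, where $e^{-\a t}(X(t)+Y(t))\ge e^{-\a t}Y(t)=e^{(\b-\a+o(1))t}\to\infty$; thus $P(T_X=\infty)=0$.

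For \emph{necessity} I argue by contraposition: assuming $\a>0$ and $\g_0>0$, I exhibit a positive-probability event on which $X$ survives. Because $p_0+\l>0$ and $\g_0>0$, from any state $(x,y)$ with $y\ge1$ the next jump is a `null' lysis (type (iii) with $k=0$, sending $(x,y)\mapsto(x,y-1)$ and leaving $x$ fixed) with the strictly positive probability $y(p_0+\l)\g_0/(x+y(1+\l))$. Multiplying these over the first $Y(0)$ jumps gives positive probability that the infected population is annihilated by $Y(0)$ consecutive null lyses, reaching the state $(X(0),0)$ with $X(0)\ge1$; from then on $(X(t))$ is a supercritical branching process by Proposition~\ref{br_prop}(1), surviving with probability $1-q^{X(0)}>0$. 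The product of the two positive factors yields $P(T_X=\infty)>0$.

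The one genuinely delicate point—and the step I expect to be the main obstacle—is the boundary case $\b=\a$ inside the $\g_0=0$ branch. By \eqref{beta_alpha} together with $p_0>0$ or $\l>0$, equality $\b=\a$ can occur only when $p_0=0$, $\l>0$ and $\g_1=1$; there $E(X\wedge\G)=1$ identically while $X\ge1$, so the supermartingale is strict and $e^{-\a t}(X(t)+Y(t))+\l\int_0^t e^{-\a s}Y(s)\,ds$ is a martingale. Its a.s.\ convergence forces $\int_0^\infty e^{-\a s}Y(s)\,ds<\infty$ on $\{T_X=\infty\}$, and to reach a contradiction I must show this integral is in fact infinite a.s.\ on survival. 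Writing $\sigma_n$ for the hitting time of level $n$ and using that $Y$ is non-decreasing here (since $q_0=0$), one has $\int_0^\infty e^{-\a s}Y(s)\,ds=\a^{-1}\sum_n e^{-\a\sigma_n}$, and the needed divergence requires the precise (Seneta--Heyde type) growth of $Y$ rather than merely $\log Y(t)/t\to\a$. Supplying this second-order branching-process input is the crux; everything else in the argument is soft.
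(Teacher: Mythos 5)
Your overall architecture is sound, and it is worth noting at the outset that the paper itself does \emph{not} prove this proposition: it explicitly declines to, sketching only three ingredients (sufficiency of $\a\leq0$ by branching-process domination, necessity of $\g_0=0$ via early extinction of $Y$ while $X$ survives, and the intuition that for $\g_0=0$ the infected population is immortal and ``grows much faster'' than $X$), and defers the rigorous proof of the hard implication to the forthcoming paper \cite{BB}. Your first two implications coincide with the paper's sketch and are correct (your iteration of null lyses also repairs the paper's slight imprecision, which as phrased only covers $Y(0)=1$). Your supermartingale argument is a genuine rigorous contribution beyond the sketch: the drift identity $\mathcal{L}(X+Y)=\a X+(\a-\l)Y$ is indeed exact for all parameter values, because the conversion terms $x\wedge k$ cancel in the sum, and together with $\log Y(t)/t\to\b$ on $\{T_X=\infty\}$ it settles every case with $\b>\a$ --- that is, all cases with $p_0>0$, and the cases $p_0=0$, $\l>0$, $E(\G)>1$ (modulo routine localization to justify the supermartingale property, which is soft as you say).

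The gap is exactly where you place it: the boundary case $p_0=0$, $\l>0$, $\g_0=0$, $\g_1=1$, $\a>0$, where $\b=\a$ by \eqref{beta_alpha}. You flag this but do not close it, so the proposal is not a complete proof; note that the paper's own heuristic has the same blind spot, since its parenthetical ``since $E(\G)>1$'' silently excludes $\g_1=1$. More seriously, the route you propose --- proving $\int_0^\infty e^{-\a s}Y(s)\,ds=\infty$ a.s. --- cannot succeed under the paper's standing assumption of finite mean only. A branching decomposition at time $t$ shows $P(\int_0^\infty e^{-\a s}Y(s)\,ds<\infty)\in\{0,1\}$; if the Kesten--Stigum condition $E(\xi\log^+\xi)<\infty$ holds then $e^{-\a s}Y(s)\to W>0$ a.s.\ (extinction being impossible since $q_0=0$) and the integral diverges, so your argument does finish the proof under that extra moment hypothesis. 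But when $E(\xi\log^+\xi)=\infty$, the Seneta--Heyde correction $L(t)=e^{-\a t}C_t$ can be integrable: heuristically, offspring tails $P(\xi>u)\asymp C\,u^{-1}(\log u)^{-2}$ give truncated-mean deficiency $\asymp C/\log x$ and hence $L(t)\asymp t^{-c}$ with $c$ proportional to $C$, so for large $C$ the integral is a.s.\ finite and the supermartingale yields no contradiction. Thus in the boundary case the argument must be \emph{replaced} (e.g.\ by exploiting that the cumulative loss inflicted on $X$ is of order $t\,C_t\gg C_t$), not merely completed, unless one is willing to add an $x\log x$ assumption. In short: your proposal is more rigorous than the text it is compared against, which defers the entire implication to \cite{BB}, but it is not a full proof of the stated proposition.
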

We do not prove this result in detail here, but note that the
sufficiency of the condition $\a\leq0$ is immediate from the
properties of branching processes described above.  If $\a>0$,
the necessity of the condition $\g_0=0$ is immediate, since if
$\g_0>0$ there is positive chance that $Y(t)=0$ at the time
of the first transition, while the healthy process survives.
Intuitively, the sufficiency of the condition $\g_0=0$ follows from
the fact that $Y(t)$ is `immortal' as long as $X(t)\neq0$
and (from~\eqref{beta_alpha}, since $E(\G)>1$)
grows \emph{much} faster
than $(X(t))_{t\geq 0}$, meaning that there will be very many infection
events for large $t$.  It is not difficult to make this intuition
rigorous, in fact this will be proved in the upcoming paper \cite{BB}.

Recall the main result Theorem \ref{mono_thm}.
In words, the assumption $\g_0=0$ says that a lysis event
always leads to new infections;  thus the failure rate
of infections is zero.
The following proposition
shows that we cannot remove the condition
$\g_0=0$ from Theorem \ref{mono_thm}, and still come to the
same conclusion. We will address this
further in Section~\ref{disc_sec}.
\begin{proposition} \label{exprop} \hspace{1mm}
\begin{enumerate}
\item There exist $(p_k)_{k\geq0}$, $\l$ and probability
vectors $\g^{(1)}\preceq\g^{(2)}$ so that
$\eta(\l,\g^{(1)})>\eta(\l,\g^{(2)})$.
\item Furthermore, there exist $(p_k)_{k\geq0},(\g_k)_{k\geq0}$ 
and $\l_1<\l_2$ so that
$\eta(\l_1)>\eta(\l_2)$.
\end{enumerate}
\end{proposition}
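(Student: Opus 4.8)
The plan is to bracket the infected process between two ordinary branching processes and then exploit the gap. First I would record a two--sided comparison: by part (3) of Proposition~\ref{br_prop}, and because capping the number of new infections at the current value $X(t)$ can only decrease $Y$, the process $(Y(t))_{t\ge0}$ is stochastically dominated by a Markov branching process with offspring distribution $(q_k)_{k\ge0}$; conversely, since a positive healthy population only ever helps the infected one (at a lysis event $Y$ changes by $-1+(X\wedge\G)\ge-1$, against the pure $-1$ of the process in part (2) of Proposition~\ref{br_prop}), a pathwise coupling shows that $Y(t)$ dominates a branching process with offspring $(p'_k)_{k\ge0}$ started from $Y(0)$. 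Writing $\eta_q$ and $\eta_{p'}$ for the corresponding Galton--Watson extinction probabilities (from $Y(0)$ ancestors), this gives $\eta_q\le\eta\le\eta_{p'}$. In particular, recalling \eqref{alpha_beta}, if $\b\le0$ then $\eta_q=1$ and hence $\eta=1$.

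The regime relevant to the $\g$--monotonicity is the coexistence regime $0<\b<\a$, and the survival fact I would use is: if $0<\b<\a$ and $X(0)$ is large enough, then $\eta<1$. Heuristically this says that both populations can grow together --- in the fluid picture $\dot x=\a x-c\,y$, $\dot y=\b y$ with $c>0$, and $\b<\a$ drives $x,y\to\infty$ --- but producing a rigorous positive--probability event on which $X(t)\to\infty$ and $Y(t)\to\infty$ is the genuinely nontrivial ingredient. I would attempt this by a drift/Lyapunov argument, or by comparison with a supercritical two--type branching process on the event that $X$ stays above the support of $\g$; this is closely related to the analysis announced in \cite{BB} and is, I expect, the main obstacle behind part (1).

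Granting this, part (1) is then clean. Fix any $(p_k)_{k\ge0}$ with $\a>0$ and set $\l=\a$. Take $\g^{(1)}=(1,0,0,\dots)$: no lysis ever produces an infection, transition (iii) degenerates to $(x,y)\mapsto(x,y-1)$, the two coordinates decouple, and $(Y(t))$ is exactly the branching process with offspring $(p'_k)$ of Proposition~\ref{br_prop}(2); its Malthusian parameter is $\a-\l=0$, so $\eta(\l,\g^{(1)})=1$. Now choose $\g^{(2)}$ with $\g^{(2)}_0>0$ and $0<E(\G)<\a/(p_0+\a)$ (for instance $\g^{(2)}=(1-\eps)(1,0,\dots)+\eps(0,1,0,\dots)$ with $\eps$ small); by \eqref{beta_alpha} with $\l=\a$ we get $\b=(p_0+\a)E(\G)$, so $0<\b<\a$, whence $\eta(\l,\g^{(2)})<1$ by the survival fact. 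Since $\g^{(1)}\preceq\g^{(2)}$ with $\g^{(1)}_0=1>0$, this yields $\eta(\l,\g^{(1)})=1>\eta(\l,\g^{(2)})$. The role of $\g_0>0$ is exactly to let the stochastically smaller $\g^{(1)}$ sit at criticality while the larger $\g^{(2)}$ pushes the virus into the supercritical coexistence regime.

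Part (2) is harder, and here I expect the main difficulty to be unavoidable, because the corresponding statement for the dominating process is false: $\eta_q=\eta_q(\l)$ is monotone increasing in $\l$. Indeed, writing $s^*=\eta_q(\l)$ for the smallest root of $g_\l(s)=\sum_kq_k(\l)s^k=s$ and differentiating the fixed--point relation, $\operatorname{sign}(ds^*/d\l)=\operatorname{sign}(\partial_\l g_\l(s^*))$ (the denominator $1-g_\l'(s^*)$ being positive at a supercritical fixed point with $q_0>0$), and a short computation from \eqref{q_eq} shows $\partial_\l g_\l(s^*)$ has the sign of $\sum_{k\ge2}p_k\,s^*(1-(s^*)^{k-1})>0$. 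Thus the $\l$--counterexample is invisible in the $X(0)\to\infty$ limit and must be a finite--reservoir effect. The plan is to take $X(0)$ small (e.g.\ $X(0)=1$) and $\g$ with $\g_0>0$ and mean exceeding the reproductive mean $E(\xi\mid\xi\ge1)$, in a near--critical regime where the virus survives only by quickly converting the few healthy cells before they are lost to healthy--cell reproduction; a modest increase in $\l$ raises the lysis rate $p_0+\l$, hence the number of infection attempts per unit time, and should strictly lower $\eta$. To make this rigorous I would compute $\partial_\l\eta$ at a suitable $\l_0$ directly from the backward (Kolmogorov) equations of the finite interacting chain, or evaluate $\eta$ explicitly on a small tractable state space, exhibiting $\l_1<\l_2$ with $\eta(\l_1)>\eta(\l_2)$. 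The crux is precisely that no branching reduction is available here: one must control the full process $(X(t),Y(t))$, and extracting the sign of $\partial_\l\eta$ (or a clean finite computation) is where the real work lies.
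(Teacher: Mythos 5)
Both parts of your proposal have genuine gaps, and in both cases the gap sits exactly where the paper's proof does something different. For part (1), your construction ($\l=\a$, $\g^{(1)}_0=1$ giving a critical decoupled process with $\eta=1$, versus $\g^{(2)}$ with $\b=(p_0+\a)E(\G)\in(0,\a)$ giving $\eta<1$) is fine \emph{modulo} the ``survival fact'' that $\a>\b>0$ implies $\eta<1$ --- but that fact is precisely the nontrivial coexistence dichotomy ($\zeta>0$ iff $\a>\b>0$) which the paper defers to the companion work \cite{BB}, and you leave it unproven, correctly flagging it as the main obstacle. Recognizing the obstacle does not remove it. The paper's own proof of part (1) needs no such input: it takes $p_2=1$, $0<\l<1$, $\g^{(1)}_0=1$ and $\g^{(2)}_1=1$. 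Then $(Y_1(t))_{t\geq0}$ is a decoupled \emph{supercritical} branching process, and one couples so that every birth in process 1 is matched by a birth in process 2, while every death in process 1 is matched, as long as $X_2(t)>0$, by a lysis-plus-infection in process 2 that leaves $Y_2$ unchanged (and by a death once $X_2(t)=0$). This gives $Y_1(t)\leq Y_2(t)$ pathwise, hence $\eta(\g^{(2)})\leq\eta(\g^{(1)})$, with strictness easy to check. Had you chosen $\g^{(2)}_1=1$ rather than pushing $\g^{(2)}$ into the coexistence regime, no heavy machinery would be needed; as written, your part (1) is incomplete (though it could be rescued by citing \cite{BB}, as the paper itself does for part (2)).

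For part (2) you offer only a programme (compute $\partial_\l\eta$ from the backward equations with $X(0)=1$ on a small state space), not a proof, and the intuition steering the programme is wrong. The paper's counterexample is \emph{not} a finite-reservoir effect: with $E(\G)<1$ the rate $\b(\l)=\a+p_0E(\G)+\l(E(\G)-1)$ from \eqref{beta_alpha} is decreasing in $\l$, and taking $p_0=3/8$, $p_2=5/8$, $E(\G)=4/5$, $\l_1=7/8$, $\l_2=17/8$ gives $\b(\l_1)=3/8>\a=1/4$ with $\l_1>\a$, while $0<\b(\l_2)=1/8<\a$. By the \cite{BB} dichotomy, at $\l_1$ coexistence is impossible, so almost surely one population dies out first; if $X$ dies first, $Y$ then has Malthusian parameter $\a-\l_1<0$ by Proposition~\ref{br_prop}(2) and dies as well, so $\eta(\l_1)=1$; at $\l_2$ coexistence has positive probability, so $\eta(\l_2)<1$. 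This holds for \emph{every} starting condition $X(0),Y(0)\geq1$, so the phenomenon is fully visible as $X(0)\to\infty$. Your premise that $\eta\to\eta_q$ in that limit fails exactly in the takeover regime $\b\geq\a$: the $q$-branching approximation is only valid up to $T_X$, and on survival to $T_X$ the process subsequently dies since its parameter becomes $\a-\l<0$. (Your side computation that $\eta_q(\l)$ is always increasing in $\l$ is also not correct in general: from \eqref{q_eq} the sign of $\partial_\l g_\l(s)$ is that of $h(s)(1-p_0)+p_0-f(s)$, where $f$ and $h$ are the generating functions of $(p_k)_{k\geq0}$ and $(\g_k)_{k\geq0}$; your displayed formula corresponds to the special case $\g_1=1$, and the sign can be negative when $\g$ charges large values.) In short, the ``crux'' you identified --- that no branching reduction is available --- is the opposite of how the paper proceeds: the counterexample comes precisely from a macroscopic comparison of branching rates, with the interaction entering only through the coexistence dichotomy.
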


\begin{figure}[hb]
\centering
 \includegraphics[width=1\textwidth]{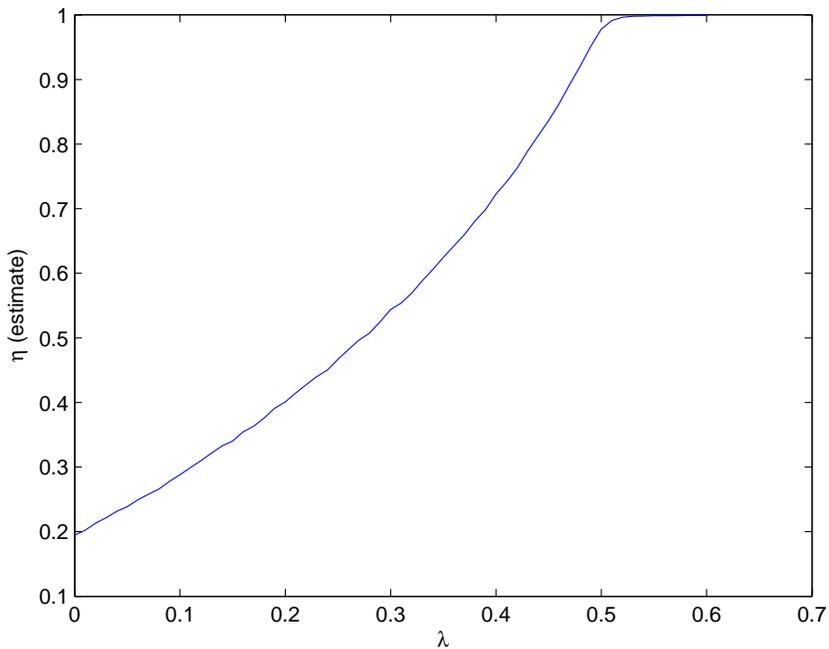}
\caption{Illustration of Example \ref{mono_ex}.}
\label{bjollaeettpucko}
\end{figure}

Proposition~\ref{exprop} is proved in
Section~\ref{proof_sec}.
Our proof of the second part requires taking
$E(\G)<1$.  It is natural to
guess that the condition $\g_0=0$ in Theorem~\ref{mono_thm}
can be replaced by the condition $E(\G)\geq1$:
infections are successful `on average'.  This is still
an open problem, but the following simulation supports
this guess.



\begin{example}\label{mono_ex}
Figure \ref{bjollaeettpucko}
shows estimated values for $\eta(\l)$
when $p_0=1/4$, $p_2=3/4$, $\g_0=9/20$ and $\g_2=11/20$.
Note that $E(\G)>1$;  the simulation suggests that $\eta$
is increasing in $\l$.  The estimates were obtained by
running, for each value of $\l$, the process $10^5$ times,
for $2\cdot 10^4$ transitions each.  The estimate
for $\eta$ is the fraction of runs where $Y$ did not
die out.  (Usually $Y$ was either 0 or very large
at the end of a run.)   For reasons similar to
the second part of Proposition~\ref{exprop},
the true value of 
$\eta(\l)$ for $\l\geq0.5$ is 1.  However for $\l=0.5$
the process $(Y(T_X+s))_{s\geq0}$ is a critical
branching process so the time until extinction is large,
accounting for the small deviation from 1 in the graph.
\end{example}

\subsection{Initial conditions and connection to ODE}
\label{init_sec}

Here is a brief informal discussion about how $\eta$
depends on the starting condition $(X(0),Y(0))$
when $\l$ and $(\g_k)_{k\geq0}$ are fixed.
We focus on two starting conditions of
potential interest, namely $(X(0),Y(0))=(n,1)$ and $(X(0),Y(0))=(n,m)$
where we think of $n$ and $m$ as
large, but $n\gg m$. The first situation is where one cell,
surrounded by healthy cells, is infected. The second situation would
correspond to the case where a large number of healthy cells are
encountered by a large number of infected cells.
Formal results would be stated asymptotically as
$m,n\rightarrow\infty$.

The `take-home message' of 
Proposition~\ref{br_prop} is that $(Y(t))_{t\geq0}$
essentially behaves like a branching process,
with a Malthusian parameter that depends on whether
$X(t)=0$ or $X(t)>0$.  If $X(t)>0$ the Malthusian
paramter is 
$\b=\a+p_0 E(\G)+\l(E(\G)-1)$
as already stated in~\eqref{beta_alpha};  if $X(t)=0$
the Malthusian parameter is $\b'=\a-\l$,
as is easily deduced from the second part 
of Proposition~\ref{br_prop}.  Here $\a=E(\xi)-1$
is as given in~\eqref{alpha_beta}.  Note that $\b'<\b$.

The most interesting behaviour occurs if $\a,\b>0$,
which we assume henceforth.  We also assume that
$\g_0>0$, since the case $\g_0=0$ is easily analyzed
using Proposition~\ref{takeover_prop}
and Theorem~\ref{mono_thm}.
The main qualitative differences in behaviour occur
according as $\b>\a$ or $\b<\a$;  in the former case
there are two interesting subcases, namely $\b'\leq0$ and $\b'>0$.

Let us start by considering the initial condition 
$X(0)=n$, $Y(0)=m$.
Since $Y(0)=m$ is large and $\b>0$, most likely the infected
population starts growing.  Roughly speaking, $Y(t)\approx me^{\b t}$.
Suppose first $\b>\a$.  The healthy population $X(t)$ will 
then at most
be of order $ne^{\a t}\ll me^{\b t}$, and  eventually $Y(t)$ will so far
exceed $X(t)$ that infections will overwhelm the healthy
population, so that we get $X(t_0)=0$ for some time $t_0$.  From then
on the infected population will have Malthusian parameter
$\b'$.  If $\b'\leq 0$ this means that $\eta$ will
equal 1, whereas if $\b'>0$ then $\eta$ will be 
close to 0, since $Y(t_0)$ is large.  On the other hand, if 
$\a>\b$ then typically $X(t)$ will be so much larger
than $Y(t)$ that the healthy population does not `feel'
the presence of the infection.  Since $\b>0$ and $Y(0)=m$
is large it follows that $\eta$ is close to 0.

Now consider the case $(X(0),Y(0))=(n,1)$.  This is similar to the
case $(X(0),Y(0))=(n,m)$, except that there is a considerable
chance (probability at least $\g_0(p_0+\l)/(1+\l)$, this being the
probability of a lysis leading to no infections)
that the infected process dies out in short time.
However, if $Y(t)$ \emph{does} start growing then its
size will eventually be in the order of $e^{\b t}$.
From then on the same intuition as for the starting
condition $(n,m)$ is valid.

Another way to understand the behaviour described above,
in particular the starting condition $(n,m)$,
is to look at the ODE model corresponding to our model.
Letting $\d=(p_0+\l)E(\G)$, this is given by 
\[
\dot{x}(t)=\a x(t)-\d y(t),\quad \dot{y}(t)=\b y(t),
\]
as long as $x(t)>0$.  If $\a\neq \b$
the solution is 
\[
y(t)=y(0)e^{\b t},\quad
x(t)=\Big(x(0)+y(0)\frac{\d}{\b-\a}\Big)e^{\a t}
-y(0)\frac{\d}{\b-\a}e^{\b t}.
\]
It is not hard to see that $x(t)$ eventually reaches
the absorbing state 0 if
$\b>\a$, whereas $x(t)\rightarrow+\oo$ if $\a>\b$
and $x(0)/y(0)=n/m$ is sufficiently large,
corresponding to whether the infection takes over or not. 
We will not study the ODE model further
since it cannot give information about survival
probabilities.

\section{Proofs}\label{proof_sec}

In this section we prove Theorem~\ref{mono_thm}
and Proposition~\ref{exprop}.
In what follows we will work with two processes
$(Z_i(t))_{t\geq 0}=(X_i(t),Y_i(t))_{t\geq 0}$, with parameters 
$(p_k)_{k\geq0}$, $\l_i$ and $(\g^{(i)})_{k\geq0}$, $i=1,2$,
respectively.  We  write $T_i=\inf\{t\geq0:X_i(t)=0\}$.

\begin{theorem}\label{coupling_thm}
Let $0\leq\l_1\leq\l_2$,
$\g^{(1)}\preceq\g^{(2)}$, and 
$\g^{(1)}_0=\g^{(2)}_0=0$.  Then there is
a coupling of the processes $(Z_i(t))_{t\geq 0}$
($i=1,2$) such that the following hold almost surely:
\begin{enumerate}
\item $T_2\leq T_1$,
\item $Y_2(t)\geq Y_1(t)$ for all $t<T_2$,
\item $X_2(t)\leq X_1(t)$ for all $t\leq T_2$, and
\item $X_1(t)+Y_1(t)\geq X_2(t)+Y_2(t)$ for all $t\geq 0$.
\end{enumerate}
\end{theorem}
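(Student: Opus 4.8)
The plan is to reduce the four assertions to two of them and then build an explicit coupling. First I would observe that, writing $N_i(t)=X_i(t)+Y_i(t)$, assertions (1) and (3) are automatic consequences of (2) and (4). Indeed, for $t<T_2$ one has the identity $X_1(t)-X_2(t)=(N_1(t)-N_2(t))+(Y_2(t)-Y_1(t))$, so (2) together with (4) forces $X_1(t)\ge X_2(t)$; at $t=T_2$ the inequality $X_2\le X_1$ is trivial since $X_2(T_2)=0$, giving (3). Moreover, for $t<T_2$ we have $X_2(t)\ge1$, hence $X_1(t)\ge X_2(t)\ge1>0$, so $T_1\ge t$ for every $t<T_2$; this is exactly (1). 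Thus it suffices to construct a coupling for which $D_N:=N_1-N_2\ge0$ holds for all $t\ge0$ and $D_Y:=Y_2-Y_1\ge0$ holds for all $t<T_2$.

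To build the coupling up to time $T_2$, I would use a particle picture and match cells between the two processes. Since $X_2\le X_1$ and $Y_1\le Y_2$ will be maintained, I match all $X_2$ healthy cells of process $2$ with healthy cells of process $1$, and all $Y_1$ infected cells of process $1$ with infected cells of process $2$. The essential device is then to pair each of the $D_Y$ surplus infected cells of process $2$ with one of the $D_X=D_N+D_Y$ surplus healthy cells of process $1$; as $D_X\ge D_Y$ there are enough, and exactly $D_N$ surplus healthy cells of process $1$ are left unpaired. Matched and paired cells are driven by common $(p_k)$-reproduction clocks, lysis of a matched infected cell is split into a rate-$(p_0+\l_1)$ part common to both processes and a rate-$(\l_2-\l_1)$ part acting on process $2$ alone, and at a common lysis I draw $\G^{(1)}\le\G^{(2)}$ using $\g^{(1)}\preceq\g^{(2)}$. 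One checks that these rules reproduce the correct marginal rates of Table~\ref{rate_tab} for each process.

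The heart of the argument is to verify that every joint transition preserves $D_N\ge0$ and $D_Y\ge0$. Most cases are immediate: common reproduction leaves both differences unchanged; a process-$2$-only lysis lowers $N_2$ and, since $\g^{(i)}_0=0$ and $X_2\ge1$ forces at least one new infection, cannot lower $Y_2$, so it can only help. The two genuinely dangerous moves are the death of a surplus healthy cell of process $1$ and a common lysis of matched infected cells. The former I neutralise by the pairing: I couple the death (the $k=0$ outcome) of a paired surplus healthy cell in process $1$ with the $p_0$-part of the lysis of its partner surplus infected cell in process $2$, so that $N_1$ and $N_2$ drop together and $D_N$ is unchanged, while unpaired surplus healthy cells may die freely because their very existence means $D_N\ge1$. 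The common lysis is the main obstacle: there $\Delta D_Y=(X_2\wedge\G^{(2)})-(X_1\wedge\G^{(1)})$, which can be negative. The key observation is that, with $X_2\le X_1$ and $\G^{(1)}\le\G^{(2)}$, one has $X_2\wedge\G^{(2)}\ge X_1\wedge\G^{(1)}$ unless $\G^{(2)}\ge X_2$; but in that exceptional case process $2$ converts all of its remaining healthy cells, so $X_2$ drops to $0$ and the event occurs exactly at $T_2$. Hence $D_Y\ge0$ is maintained throughout $[0,T_2)$, which is all that (2) requires.

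Finally I would treat $t\ge T_2$ to obtain (4) in full. For $t\ge T_2$ the healthy population of process $2$ is extinct, so $N_2=Y_2$ evolves as a branching process in which each particle reproduces by $(p_k)$ and dies at rate $p_0+\l_2$, while $N_1$ is a branching total whose particles die at rate $p_0$ or $p_0+\l_1\le p_0+\l_2$. Starting from $N_1(T_2)\ge N_2(T_2)$ (guaranteed above), I recouple by matching each particle of process $2$ with a distinct particle of process $1$, which is possible since $N_1\ge N_2$, and use a common reproduction clock together with a death coupling in which every process-$1$ particle dies no faster than its process-$2$ partner. This standard monotone coupling of branching processes preserves $N_1\ge N_2$ for all later times, completing (4) and hence the theorem. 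The single step requiring real care is the common-lysis analysis above, where the restriction of (2) to $t<T_2$ is exactly what rescues the argument.
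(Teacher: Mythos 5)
Your proposal is correct and is essentially the paper's own proof: your matching of healthy--healthy and infected--infected cells plus pairing surplus infected cells of process 2 with surplus healthy cells of process 1 is exactly the paper's $(B,B)$, $(R,R)$, $(B,R)$, $(B,\es)$ twinning, with the same splitting of lysis rates into a common $(p_0+\l_1)$ part and residual parts, the same coupled draws $\G^{(1)}\le\G^{(2)}$, the same observation that the only violating case $\G^{(2)}\ge X_2$ occurs exactly at $T_2$, and the same re-coupling after $T_2$. The only (harmless) cosmetic difference is that you first reduce (1) and (3) to (2) and (4) algebraically, whereas the paper reads all four claims directly off the colour invariants of the coupling.
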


Before we prove Theorem \ref{coupling_thm},
we show how Theorem~\ref{mono_thm} follows, almost immediately,
from Theorem~\ref{coupling_thm} and Proposition~\ref{takeover_prop}.

\begin{proof}[Proof of Theorem~\ref{mono_thm}]
By Proposition~\ref{takeover_prop} we have $T_1<\infty$
almost surely.  For $t>T_1$ it follows from parts (1)
and (4) of Theorem~\ref{coupling_thm} that
$Y_1(t)\geq Y_2(t)$, so that
$P(Y_1(t)\rightarrow0)\leq P(Y_2(t)\rightarrow0)$.
\end{proof}

\begin{proof}[Proof of Theorem~\ref{coupling_thm}]
The basic strategy is to `twin' cells in the process
$(Z_1(t))_{t\geq 0}$ with cells in the process
$(Z_2(t))_{t\geq 0}$ so that `events' in one process
correspond with `events' in the other process.
Intuitively, the reason we can acheive a coupling as the
one claimed is that only `lysis events' occur at a higher
rate in the second process:  these events always increase the
infected population, but decrease both the healthy and
total populations.  Here are the details.

It will be convenient to think of
$Z_i(t)$, $X_i(t)$, and $Y_i(t)$ ($i=1,2$) as sets of individual cells.
Formally, we could label the elements in the set $Z_i(t)$ by $(x_1,i),\ldots
(x_{|X_i(t)|},i),(y_1,i),\ldots, (y_{|Y_i(t)|},i).$ However, this notation 
would quickly become cumbersome, and so we will use a somewhat less formal,
although still rigorous, approach.
We will describe the transitions of the
coupled process $(Z_1(t),Z_2(t))_{t\geq 0}$ at the level of \emph{pairs}
$(a,b)$ of individual cells, where $a\in Z_1(t)$ and $b\in Z_2(t)$.
For each $t\geq0$, each element of $Z_1(t)\cup Z_2(t)$ is
required to belong to a unique such pair, and we say that $a$
and $b$ are \emph{twinned} if they belong to the same pair.
We allow the possibility $b=\es$, in which case we say that
$a$ is \emph{untwinned} (the possibility $a=\es$ will not occur).
We will say that a cell is of \emph{type~1}
(respectively, \emph{type~2}) if it belongs to $Z_1(t)$
(respectively,~$Z_2(t)$).

It is a standard consequence of
the ordering $\g^{(1)}\preceq \g^{(2)}$
that we may couple two random variables
$\G^{(1)}$ and $\G^{(2)}$ such that $\G^{(1)}$ has law $\g^{(1)}$
and $\G^{(2)}$ has law $\g^{(2)}$ and $P(\G^{(1)}\leq\G^{(2)})=1$.
We assume henceforth that $\{(\G_{1,i},\G_{2,i})\}_{i\geq 1}$
is an i.i.d.\
sequence such that $\G_{1,i}\leq \G_{2,i}$
for every $i,$ and that $\G_{1,i}$ has law $\g^{(1)}$
and $\G_{2,i}$ has law $\g^{(2)}.$ In the construction that follows below, if there is a
lysis event at some time $\tau,$ we let $I(\tau)$ be the smallest $i$ such that
$(\G_{1,i},\G_{2,i})$ has not previously been used in the construction.

We start by describing the coupling
$(Z_1(t),Z_2(t))_{t \geq 0}$ up to time
$T_1\wedge T_2$ (it will transpire that $T_1\wedge T_2=T_2$).
It will be convenient to think of healthy cells as
coloured \emph{blue} and infected cells as coloured \emph{red}.
The allowed colour combinations before time $T_1\wedge T_2$
are the following (the right column introduces notation
for the number of pairs of each colour combination):
\begin{center}
\begin{tabular}{l|c}
Colour & Number \\ \hline
$(B,B)$ & $b_2$ \\
$(R,R)$ & $r$ \\
$(B,R)$ & $b_1$ \\
$(B,\es)$ & $b_0$
\end{tabular}
\end{center}

Now we turn to describing the transitions.
\begin{enumerate}[(i)]
\item
Any pair $(a,b)$ is replaced by $k\geq1$ new pairs
at rate $p_k$;  all the new pairs have the same
colour combination as the original pair $(a,b)$ (also in the case
$b=\es$).
\item A pair of colour $(B,B)$
or $(B,\es)$ is deleted at rate $p_0$.
\item A pair $(a,b)=(R,R)$ can give rise to the following additional transitions.

Firstly, at rate $p_0+\l_1$ it has a \emph{type-1-lysis}.
If $\tau$ is the time of such an event,
we first delete $(a,b)$, and then one of the
following cases occur.

\emph{Case 1:} $\G_{2,I(\tau)}\leq b_2$.
Then we take $\G_{1,I(\tau)}$ pairs of colour $(B,B)$
and change their colour to $(R,R)$, and we take
$\G_{2,I(\tau)}-\G_{1,I(\tau)}$ of the
remaining $(B,B)$-pairs and change their colour to $(B,R)$.

\emph{Case 2:} $\G_{1,I(\tau)}>b_2$.  Then we start by
changing all the $b_2$ pairs of colour $(B,B)$ to $(R,R)$.
Let $\G_{1,I(\tau)}':=\G_{1,I(\tau)}-b_2>0$, and
proceed by changing $\G_{1,I(\tau)}' \wedge b_1$ pairs of colour $(B,R)$
to colour $(R,R)$.
Proceed by letting $\G_{1,I(\tau)}^{''}:=(\G_{1,I(\tau)}'-b_1) \vee 0$
and changing $\G_{1,I(\tau)}^{''}\wedge b_0$ pairs
of colour $(B,\es)$ to $(R,\es)$.  Note that in this case we arrive at time
$T_2\leq T_1$, and that this inequality is strict if  and only if
there remain pairs of colour $(B,R)$ or $(B,\es)$ after the changes are made.
In particular, we only create any $(R,\es)$-pairs
if we arrive at time $T_2$.

\emph{Case 3:} $\G_{2,I(\tau)}>b_2\geq \G_{1,I(\tau)}$.
Then we first take $\G_{1,I(\tau)}$ pairs of colour $(B,B)$
and change their colour to $(R,R)$, and then take the remaining
$(b_2-\G_{1,I(\tau)})$
pairs of colour $(B,B)$ and change them to $(B,R)$.  In this case we
arrive at time $T_2<T_1$.

Secondly, at rate $\l_2-\l_1$ the pair $(a,b)$
has a \emph{type-2-lysis}.
If $\tau$ is the time of such an event:
\begin{equation}\label{t2l}
\begin{array}{ccl}
& \bullet & \text{the original pair $(a,b)$ stays unaltered;}\\
& \bullet & \text{one pair of colour $(B,B)$ gets replaced by $(B,\es)$;}\\
& \bullet & \text{$(\G_{2,I(\tau)}-1)\wedge(b_2-1)$ of
the remaining $(b_2-1)$ pairs} \\
& & \text{of colour $(B,B)$ have their colour changed to $(B,R)$.}
\end{array}
\end{equation}
Note that we arrive at time $T_2<T_1$ if $\G_{2,I(\tau)}\geq b_2$.

\item
Finally, a pair $(a,b)$ of colour $(B,R)$ can give rise to
the following additional transitions,

Firstly, at rate $p_0$ a \emph{type-1-death} occurs.
If $\tau$ is the time of such an event,
then $(a,b)$ is deleted and $\G_{2,I(\tau)}\wedge b_2$
pairs of colour $(B,B)$ are changed
to $(B,R)$.  Note that if $\G_{2,I(\tau)}\geq b_2$ then we arrive at time
$T_2<T_1$.

Secondly, at rate $\l_2$ a type-2-lysis.
This yields the exact same transition as described
in (~\ref{t2l}).
Again, if $\G_{2,I(\tau)}\geq b_2$ then we arrive at time $T_2<T_1$.
\end{enumerate}

It is straightforward to check that the described coupling produces the right
marginal dynamics. It might be helpful however to explain why we replace
a pair $(B,B)$ by $(B,\es)$ in~\eqref{t2l}.
Here, the cell that lyses is destroyed
and one might expect a transition from $(R,R)$ to $(R,\es).$ However,
since the type 2 cell that lyses infects cells of $X_2(t),$
we pick one of the newly infected type 2
cells (belonging to a $(B,B)$ pair), and twin it with the previous type 1
twin that did not undergo a lysis. The effect
is the same as replacing a pair $(B,B)$ by $(B,\es)$ and
leaving $(R,R)$ unchanged.

The description above applies until time $T_2$;  the construction implies
that $T_2\leq T_1$, proving the first part of the theorem.
Since a type 1 cell is of colour $R$ only if it is twinned with
a type 2 cell of colour $R$ (for $t<T_2$),
the second part of the theorem also follows.
Similarly, a type 2 cell is coloured $B$ only if it is
paired with a type 1 cell of colour $B$
(for $t\leq T_2$), proving the third part
of the theorem.

The construction so far also implies the final part of the
theorem for the range $0\leq t\leq T_2$ since every type 2 cell has
a twin;  the construction for $t>T_2$, which we will describe now,
will preserve this property.

For $t\geq T_2$ the process $(Z_2(t))_{t \geq T_2}$ consists only of infected cells.
It will be convenient now to think of the cells of type 2 as
\emph{green}. The reason is that in the absence of healthy cells,
the process $(Y_2(t))_{t \geq T_2}$ evolves differently compared to 
when $t<T_2,$
as described above and in Proposition~\ref{br_prop}. Therefore,
we keep the colours blue and red for healthy and
infected cells of type 1, respectively.  At a time $t\geq T_2$ we
then have the following possible colour combinations
(the right column introduces notation
for the number of pairs of each colour combination, note that we are redefining
$b_0$ and $b_1$):
\begin{center}
\begin{tabular}{l|c}
Colour & Number \\ \hline
$(B,\es)$ & $b_0$ \\
$(B,G)$ & $b_1$ \\
$(R,G)$ & $r_1$ \\
$(R,\es)$ & $r_0$
\end{tabular}
\end{center}
The following transitions may occur.
\begin{enumerate}[(i)]
\item As before, any pair $(a,b)$ is replaced by $k\geq 2$
identical pairs at rate $p_k$.
\item A pair of colour $(B,\es)$ or $(B,G)$ is deleted at rate $p_0$.
\item A pair of colour $(B,G)$ is, additionally, changed to $(B,\es)$
at rate $\l_2$.
\item A pair $(a,b)$ of colour $(R,\es)$
or $(R,G)$ lyses at rate
$p_0+\l_1$. If $\tau$ is the time of such an event, then
we delete $(a,b)$, change the
colour of $\G_{1,I(\tau)}'=\G_{1,I(\tau)}\wedge b_0$ pairs of colour $(B,\es)$ to $(R,\es)$,
and finally change the colour of $(\G_{1,I(\tau)}-\G_{1,I(\tau)}')\wedge b_1$ pairs of colour
$(B,G)$ to $(R,G)$.
\item Additionally, a pair of colour $(R,G)$ is replaced by a pair
$(R,\es)$ at rate $\l_2-\l_1$.
\end{enumerate}
Since a green cell is always twinned with a type 1 cell,
this establishes the result.

As before, it is elementary to check that the described coupling produces the right marginal dynamics.
\end{proof}

\begin{proof}[Proof of Proposition~\ref{exprop}]
For the first part, consider the models where $p_2=1$, $0<\l<1$ is
arbitrary, and where $\g^{(1)}_0=1$ and $\g^{(2)}_1=1$.  Clearly
$\g^{(1)}\preceq\g^{(2)}$.  Write $(X_1(t),Y_1(t))_{t\geq 0}$ and
$(X_2(t),Y_2(t))_{t\geq 0}$ for the processes with parameters
$(p_k)_{k\geq0},\l,\g^{(1)}$ and $(p_k)_{k\geq0},\l,\g^{(2)}$,
respectively.  Let $X_1(0)=X_2(0)$ and $Y_1(0)=Y_2(0).$ There is no
interaction between $(X_1(t))_{t\geq 0}$ and $(Y_1(t))_{t\geq 0}$, and
$(Y_1(t))_{t\geq 0}$ simply forms a supercritical branching process
(this uses $\lambda<1$).

It is straightforward to couple 
$(Y_1(t))_{t\geq 0}$ with $(X_2(t),Y_2(t))_{t\geq 0}$ so that the
following hold.  Firstly, each transition $Y_1(t)\rightarrow Y_1(t)+1$
is accompanied by $(X_2(t),Y_2(t))\rightarrow(X_2(t),Y_2(t)+1)$.
This simply corresponds to the event that an infected
cell is replaced by two identical ones;  recall that $p_2=1$.
Secondly, if $X_2(t)\neq0$ then each transition $Y_1(t)\rightarrow
Y_1(t)-1$ is accompanied by
$(X_2(t),Y_2(t))\rightarrow(X_2(t)-1,Y_2(t))$.  
This corresponds to the death of an infected cell:
since $\g^{(1)}_0=1$ such an event simply reduces $Y_1(t)$
by $1$, and since $\g^{(2)}_1=1$ one healthy cell becomes
infected in the second process.
Thirdly, if $X_2(t)=0$
then each transition $Y_1(t)\rightarrow Y_1(t)-1$ is accompanied by
$(X_2(t),Y_2(t))\rightarrow(X_2(t),Y_2(t)-1)$.  
This corresponds to the death of an infected cell in the 
absence of healthy cells to infect.  In such a coupling, 
$Y_1(t)\leq Y_2(t)$ for all $t\geq0$, almost surely, so
$\eta(\g^{(2)})\leq\eta(\g^{(1)})$.  It is easy to see that the
inequality is in fact strict.

For the second part, we use the result in the forthcoming
article~\cite{BB} that the \emph{coexistence probability}
\[
\zeta(\l)=P(X(t)Y(t)\neq0\;\forall t\geq0)
\]
satisfies: $\zeta=0$ if $\b\geq \a>0$, and $\zeta>0$ if $\a>\b>0$.
Note that $\eta<1$ if $\zeta>0$.  If $\zeta(\l)=0$ and, in addition,
$\l\geq\a$, then it follows from part (2) of Proposition~\ref{br_prop}
that $\eta(\l)=1$: either $(Y(t))_{t\geq 0}$ becomes extinct before
$(X(t))_{t\geq 0}$, or $(X(t))_{t\geq 0}$ becomes extinct before
$(Y(t))_{t\geq 0}$, and in the latter case $(Y(t))_{t\geq 0}$
subsequently forms a branching process which has Malthusian parameter
$\b'=\a-\l\leq0$ and therefore becomes extinct almost surely.  It is easy
to check that if $p_0=3/8$, $p_2=5/8$, $E(\G)=4/5$, $\l_1=7/8$ and
$\l_2=17/8$, then $\beta(\l_1)=3/8>2/8=\a$ and $\l_1>\a$, whereas
$\b(\l_2)=1/8<\a$.  Thus $\zeta(\l_1)=0$ and $\eta(\l_1)=1$, but
$\zeta(\l_2)>0$ so $\eta(\l_2)<1$.
\end{proof}

\section{Discussion}
\label{disc_sec}

\subsection{Bacteriophage Lambda}

The virus Lambda, which preys on the bacterium
\emph{e-coli},
has been the subject of intensive research, mainly to
understand the fascinating lysis--lysogeny
behaviour~\cite{johnson81,oppenheim05}.
For this virus, a decision between lysis
and lysogeny occurs both \emph{at} the time of
infecting a new host, as well as after having been
incorporated in the host's DNA~\cite{lieb53}.
The switch
to lytic behaviour in response to stress to the host
seems inevitable~\cite{oppenheim05}.
In recent years, some exciting single-cell studies
have investigated the factors determining the
decision at the time of infection.  The results showed
strong dependence on environmental signals as well
as the volume of the infecting cell and
the number of infecting virions per
cell (or multiplicity of infection,~\cite{joh_weitz,pnas08,zeng10}.

A number of mathematical models have been proposed to
study the balance between the lytic and lysogenic
states~\cite{ackers82,arkin98,mcadams95,reinitz_vaisnys,santillan04,shea_ackers}.
With time
the models simulated more and more accurately by
including newly discovered genetic components, describing
a strict bias towards the lysogenic state  as
exhibited by `wet' experiments.  Both the experimental and
theoretical works revealed the molecular mechanism
of the decision, but did not study the
\emph{motivation} behind such a strict bias in the
decision system, as we do here.

\subsection{General conclusion}\label{gc_sec}
In this work we have tried to reveal the reason for
the observed stability of the lysogenic state.
Our results suggest that, regardless of the details
of the molecular mechanism behind the decision, the
stability of the dormant state is a fundamental
part of a long-term survival strategy.

To be specific, in Theorem~\ref{mono_thm} we showed, 
in the context of our model, not only that 
$\l=0$ maximizes the survival chance $1-\eta(\l)$
of the virus, but that $\eta(\l)$ is in fact
monotonically increasing in $\l\geq0$;
this was done under the assumption $\g_0=0$.
Heuristically, monotonicity in $\l$
holds for the following reason.  As long as 
$X(t)\neq0$, the infected process $(Y(t))_{t\geq0}$
behaves as a branching process with a higher
exponential growth rate than  $(X(t))_{t\geq0}$;
this follows from Proposition~\ref{br_prop},
equation~\eqref{beta_alpha}, and the fact that 
$\g_0=0$ implies $E(\G)\geq1$.  Typically,
therefore, $X(t)=0$ for some $t$,
after which point  $(Y(t+s))_{s\geq0}$
by Proposition~\ref{br_prop}
is a branching process with  Malthusian 
parameter $\b'=\alpha-\l$ which is decreasing
in $\l$.  The smaller $\l$ is, the larger should be the
chance that $(Y(t+s))_{s\geq0}$ survives.  

The intuition above is valid whenever $E(\G)\geq1$,
supporting the guess that the conclusion of 
Theorem~\ref{mono_thm} should hold whenever
$E(\G)\geq1$.  We have only been able to make the 
intuition rigorous when $\g_0=0$, essentially because
then $X(t)\rightarrow0$ almost surely (by 
Proposition~\ref{takeover_prop}), which does not hold if 
$\g_0>0$.  Interestingly, $\eta(\l)$ need not be 
monotone in $\l$ if $E(\G)<1$, as shown in 
Proposition~\ref{exprop};  this does not, however,
rule out the possibility that $\eta(\l)$
still always attains its minimum at $\l=0$.

\subsection{Decision at the time of infection}
\label{imm_sec}

The results above concern only the decision between
lysis and lysogeny \emph{after} the virus has
been incorporated in the host's DNA, and not decisions
\emph{at} the time of infection.  A simple version of a decision
at the time of infection can easily be incorporated
into our model as follows.

We modify transition rate (iii) in Section~\ref{def_sec}
so that, with a fixed probability $\k>0$, each newly infected
cell is \emph{immediately} replaced by a random (independent)
number $\G$ of infected cells,
taken from the healthy population $X(t)$.
This proceeds recursively for
all thereby newly infected cells, until there are either no
healthy cells left, or the recursion terminates by itself.
Note that the life-length
of an infected cell is now a convex combination of an exponential
distribution and a Dirac mass at 0, but that the process
$(X(t),Y(t))_{t \geq 0}$ is still Markovian.
As mentioned, the recursion terminates at the latest when the
healthy population $X(t)$ is exhausted. Therefore, there are no 
transitions of `infinite size'.

The total number of new infections
due to the original lysis event may be described using 
a random variable $\G'$,
whose distribution is easily described in terms of $\k,\G$ and $X(t).$
The process thus described is \emph{not}
simply the same as our main model with $\G$ replaced 
by $\G'$ (one easy way to see this is to note that
the new process can have transitions decreasing 
$X(t)+Y(t)$ by more than one at a time).  
However, it is possible to modify the 
proofs of Theorems~\ref{coupling_thm} and~\ref{mono_thm}
to obtain the following:
\begin{theorem}\label{immediate_thm}
In the process with decision at the time of infection
described above, with $\g_0=0$, the extinction
probability of $(Y(t))_{t \geq 0}$ is monotonically increasing in $\l$,
$(\g_k)_{k\geq0}$ and $\k$.
\end{theorem}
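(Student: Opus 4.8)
The plan is to mirror the two-step proof of Theorem~\ref{mono_thm}. First I would prove an analogue of the coupling result Theorem~\ref{coupling_thm} for the immediate-infection model, comparing two copies $(Z_i(t))_{t\ge0}=(X_i(t),Y_i(t))_{t\ge0}$ with parameters $(\l_i,\g^{(i)},\k_i)$, $i=1,2$, satisfying $\l_1\le\l_2$, $\g^{(1)}\preceq\g^{(2)}$, $\k_1\le\k_2$ and $\g^{(1)}_0=\g^{(2)}_0=0$, and producing a coupling for which the four conclusions (1)--(4) of Theorem~\ref{coupling_thm} hold almost surely. Granting such a coupling, Theorem~\ref{immediate_thm} follows exactly as Theorem~\ref{mono_thm} did: one needs only that $T_1<\infty$ almost surely, after which parts (1) and (4) give $Y_1(t)\ge Y_2(t)$ for $t>T_1$ and hence $\eta_1\le\eta_2$. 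The required finiteness is the analogue of Proposition~\ref{takeover_prop} for this model; since immediate bursts only deplete the healthy population faster, it follows from the ordinary model (the case $\k=0$) by applying the coupling itself with $\k_1=0$, which yields $T_1^{(\k)}\le T_1^{(0)}<\infty$ whenever $\g_0=0$.

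For the coupling I would reuse the twinning-and-colouring device of Theorem~\ref{coupling_thm}, treating each original lysis as a \emph{compound} transition, namely a tree of bursts built recursively. Two ingredients control the cascades. To each burst-decision I attach an independent uniform variable $U$, declaring that the newly infected cell bursts immediately in process~1 iff $U<\k_1$ and in process~2 iff $U<\k_2$; since $\k_1\le\k_2$, every immediate burst of process~1 is matched by one of process~2. The coupled infection sizes $\{(\G_{1,i},\G_{2,i})\}$ of Theorem~\ref{coupling_thm} are drawn afresh for each burst, with $\G_{1,i}\le\G_{2,i}$. With these choices the cascade attached to a given original lysis in process~2 \emph{dominates} the corresponding cascade in process~1: at every level it performs at least as many bursts and converts at least as many healthy cells. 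The monotone facts that each burst raises the infected count, lowers the healthy count, and lowers the total by exactly one, then let me verify parts (1)--(3) as in the original proof, the twinning invariants (a type-1 cell is red only if its type-2 twin is red; a type-2 cell is blue only if its type-1 twin is blue) being preserved burst by burst.

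The principal obstacle is part~(4), $X_1(t)+Y_1(t)\ge X_2(t)+Y_2(t)$, in the presence of cascades. Since each burst lowers the relevant total by one, (4) is equivalent to the pathwise statement that process~2 has performed at least as many bursts as process~1 by every time $t$. For $t<T_2$ this is guaranteed by the level-by-level domination of the cascades. The delicate regime is $T_2\le t<T_1$, where process~2 has no healthy cells and can burst no further, while process~1 still has healthy cells and keeps cascading, each burst lowering $X_1+Y_1$ by one with no matching decrease for process~2. In the twinned picture the danger is that a process-1 burst destroys a type-1 cell whose still-living type-2 (``green'') twin would thereby be orphaned: killing that green cell would falsify the branching marginal of $(Y_2(t))_{t\ge0}$, while leaving it untwinned would break the bookkeeping behind (4). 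The remedy I would pursue is a re-twinning rule in the spirit of~\eqref{t2l}, re-attaching the orphaned green cell to a type-1 cell freed by an earlier process-2 burst, i.e.\ to a $(B,\es)$- or $(R,\es)$-pair. The crux is then to show that this reservoir---equivalently, the buffer $D:=(X_1+Y_1)-(X_2+Y_2)$, replenished on $[T_2,T_1)$ by those green deaths (at rates $\l_2$ and $\l_2-\l_1$) that spare their type-1 twins and so create $(B,\es)$ and $(R,\es)$ from $(B,G)$ and $(R,G)$---never runs dry before $T_1$. Making this accounting work \emph{pathwise}, reconciling the instantaneous process-1 cascades with the rate-based green deaths of process~2, is where I expect the real difficulty to lie; the natural device is to order the bursts within each cascade so as to consume $\es$-pairs before green-twinned pairs, mirroring the observation in Theorem~\ref{coupling_thm} that an $(R,\es)$-pair is lysed only when such slack is present.
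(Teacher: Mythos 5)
Your plan follows the paper's own proof quite closely: the paper also obtains Theorem~\ref{immediate_thm} by grafting coupled immediate bursts onto the coupling of Theorem~\ref{coupling_thm} (each newly created $(R,R)$ undergoes a type-1-lysis with probability $\k_1$ or a type-2-lysis with probability $\k_2-\k_1$, each new $(B,R)$ a type-2-lysis with probability $\k_2$, each new $(R,\es)$ or $(R,G)$ a lysis with probability $\k_1$, recursively), which is exactly your uniform-$U$ device; and your derivation of $T_1<\infty$ by coupling against the $\k=0$ case and invoking Proposition~\ref{takeover_prop} is a careful version of what the paper leaves implicit. The difference is that the paper simply asserts that it is ``easy to see'' that conclusions (1)--(4) persist, whereas you correctly identify that on $[T_2,T_1)$ this is not a routine verification. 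That observation is to your credit; the issue is real.

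However, the repair you propose cannot be carried out, because its crux---that the buffer $D=(X_1+Y_1)-(X_2+Y_2)$ of $\es$-pairs never runs dry before $T_1$---is false. Take $\l_1=\l_2=0$, $\k_1=\k_2=\k\in(0,1)$, $\g^{(1)}_1=1$, $\g^{(2)}_1=\g^{(2)}_2=\tfrac12$, $0<p_0<\tfrac12$, $p_2=1-p_0$, and $(X_i(0),Y_i(0))=(3,1)$. With positive probability the following occurs. First lysis: $(\G_1,\G_2)=(1,2)$, Case 1, no immediate bursts; this leaves one $(B,B)$, one $(R,R)$, one $(B,R)$ and no $\es$-pairs. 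Second lysis: $(\G_1,\G_2)=(1,1)$, Case 1, no immediate bursts; this consumes the last $(B,B)$, so at $T_2$ the configuration is one $(R,G)$, one $(B,G)$, and $D=0$. Moreover, since $\l_2=0$ and $\l_2-\l_1=0$, the replenishment mechanism you invoke is switched off entirely, so $D$ stays at $0$. Now the $(R,G)$ pair rate-lyses; its single infection converts the $(B,G)$ into an $(R,G)$, whose newly infected type-1 cell immediately bursts (probability $\k$) and finds no healthy cells left to infect. Process 1 is now extinct, while the green cell of that pair is still alive: it cannot be killed (that would put an atom, simultaneous with another death, into its exponential lifetime and destroy the marginal law of $(Y_2(t))_{t\geq0}$), and there is no $\es$-pair to re-twin it to. Hence $Y_1=0<1=Y_2$, conclusion (4) fails, and since $\a-\l_2=1-2p_0>0$ the orphan survives forever with positive probability, so the coupling does not yield $\{Y_1\rightarrow0\}\subseteq\{Y_2\rightarrow0\}$. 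No ordering of bursts within the cascade can help: $D$ is a functional of the two marginal trajectories, not of the pairing, and on this event it is forced to become negative. So your proposal---and, it should be said, the paper's own one-paragraph argument, which never addresses this case---does not establish the theorem as stated; closing the gap needs a genuinely new idea for the regime after $T_2$ (for instance, a comparison of $Y_1$ and $Y_2$ after $T_1$ in distribution rather than pathwise), not merely better bookkeeping.
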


Briefly, the required modifications to 
Theorem~\ref{coupling_thm} are the following;  for
notation and terminology, see Section~\ref{proof_sec}.
In addition to the parameters $\l_1\leq\l_2$
and $\g^{(1)}\preceq\g^{(2)}$ we also have
$0\leq\k_1\leq\k_2\leq1$.  In the proof of 
Theorem~\ref{coupling_thm} there were several points
where pairs of colour $(R,R)$, $(B,R)$, $(R,\es)$
or $(R,G)$ were created.  These transitions are still
valid, but now, in addition, each newly created
$(R,R)$ will itself immediately undergo a type-1-lysis
with probability $\k_1$, or a type-2-lysis with
probability $\k_2-\k_1$.  Similarly, each new $(B,R)$
immediately undergoes a type-2-lysis with probability $\k_2$,
and each new $(R,\es)$ or $(R,G)$ immediately undergoes a 
lysis with probability $\k_1$.  The same is then done 
recursively for all thereby newly created pairs
$(R,R)$, $(B,R)$, $(R,\es)$ or $(R,G)$.  The order in which
these `immediate' transitions are carried out is
not important.  It is easy to see that the conclusions
of Theorem~\ref{coupling_thm} still hold under these modifications.

Theorem~\ref{immediate_thm} 
says that, when $\g_0=0$, the optimal `choice'
of $\l$ and $\k$ for the virus is $\l=\k=0$.
Returning to the bacteriophage Lambda,
which frequently lyses its host cell immediately
after infection, we conclude
that the model just described is inadequate as a description
of this virus.  The main confounding assumptions are presumably:
firstly, that $\g_0=0$;  secondly, that there is absolutely
no delay between the lytic phase and new infections; and thirdly,
that factors such as \textsc{moi}, which experiments have shown to be
important, are not included.  Furthermore, it is not hard to
imagine other factors which could make a more rapid increase
in numbers beneficial to the virus in the early stages of
an epidemic, such as competition from other viruses or an
immune response.  It is hoped that relevant
modifications of the model can be studied in future work.

\subsection{Future directions}
The main questions left open by this work are:
is Theorem~\ref{mono_thm} true whenever $E(\G)\geq1$? and,
what choices of $\l$ and $(\g_k)_{k\geq0}$ minimize $\eta$ when
$E(\G)<1$?

There are many natural ways to modify the model to make it
more realistic as a model for viruses.  One direction would be
to instead let $\G$ be the number of
new virions upon a lysis event, and let 
the number of new infections depends also on the ratio of $X(t)$
to  $Y(t)$.  Another direction would be to study the model with
decision at the moment of infection also when $\g_0>0$;
this requires some new arguments.

It is natural to consider the possibility of two competing
viruses, alternatively a virus competing with an immune system.
Finally, it would be natural to look at a version of the process
which is based not on branching process dynamics, but on the
dynamics of population models having some type of equilibrium 
like the logistic process~\cite{renshaw}.
Indeed it is reasonable to expect that the cell population
will be in equilibrium at the time of infection, possibly making
such a formulation closer to reality.

\bibliography{vip}

\begin{thebibliography}{10}

\bibitem{ackers82}
G.~K. Ackers, A.~D. Johnson, and M.~A. Shea.
\newblock Quantitative model for gene regulation by lambda phage repressor.
\newblock {\em PNAS}, 79(4):1129--1133, 1982.

\bibitem{arkin98}
A.~Arkin, J.~Ross, and H.H. McAdams.
\newblock Stochastic kinetic analysis of developmental pathway bifurcation in
  phage lambda-infected escherichia coli cells.
\newblock {\em Genetics}, 149:1633--1648, 1998.

\bibitem{athreya_ney}
K.~B. Athreya and P.~Ney.
\newblock {\em Branching processes}.
\newblock Springer, 1972.

\bibitem{aurell_sneppen}
E.~Aurell and K.~Sneppen.
\newblock Epigenetics as a first exit problem.
\newblock {\em Physical Review Letters}, 88:048101, 2002.

\bibitem{BB}
J.~E. Bj{\"o}rnberg and E.~I. Broman.
\newblock In preparation.

\bibitem{haccou_jagers}
P.~Haccou, P.~Jagers, and V.~A. Vatutin.
\newblock {\em Branching processes: variation, growth and extinction of
  populations}.
\newblock Cambridge University Press, 2005.

\bibitem{harris63}
T.~E. Harris.
\newblock {\em The theory of branching processes}.
\newblock Springer, 1963.

\bibitem{joh_weitz}
R.~I. Joh and J.~S. Weitz.
\newblock To lyse or not to lyse: Transient-mediated stochastic fate
  determination in cells infected by bacteriophages.
\newblock {\em PLoS Computational Biology}, 7(3), 2011.

\bibitem{johnson81}
A.~D. Johnson, A.~R. Poteete, G.~Lauer, R.~T. Sauer, G.~K. Ackers, and
  M.~Ptashne.
\newblock Lambda repressor and cro---components of an efficient molecular
  switch.
\newblock {\em Nature}, 294:217--223, 1981.

\bibitem{kendall_saunders}
W.~S. Kendall and I.~W. Saunders.
\newblock Modelling biological populations in space and time.
\newblock {\em Journal of the Royal Statistical Society B}, 45:238--244, 1983.

\bibitem{lieb53}
M.~Lieb.
\newblock The establishment of lysogenicity in escherichia coli.
\newblock {\em Journal of Bacteriology}, 65:642--651, 1953.

\bibitem{little99}
J.~W. Little, D.~P. Shepley, and D.~W. Wert.
\newblock Robustness of a gene regulatory circuit.
\newblock {\em The EMBO Journal}, 18:4299--4307, 1999.

\bibitem{lwoff53}
A.~Lwoff.
\newblock Lysogeny.
\newblock {\em Bacteriological Review}, 17(4):269--337, 1953.

\bibitem{mcadams95}
H.~H. McAdams and L.~Shapiro.
\newblock Circuit simulation of genetic networks.
\newblock {\em Science}, 269(5224):650--656, 1995.

\bibitem{nowak_may}
M.~A. Nowak and R.~M. May.
\newblock {\em Virus dynamics: mathematical principles of immunology and
  virology}.
\newblock Oxford University Press, 2000.

\bibitem{oppenheim05}
A.~B. Oppenheim, O.~Kobiler, J.~Stavans, D.~L. Court, and S.~Adhya.
\newblock Switches in bacteriophage lambda development.
\newblock {\em Annual Review of Genetics}, 39:409--429, 2005.

\bibitem{reinitz_vaisnys}
J.~Reinitz and J.~R. Vaisnys.
\newblock Theoretical and experimental analysis of the phage lambda genetic
  switch implies missing levels of co-operativity.
\newblock {\em Journal of Theoretical Biology}, 145(3):295--318, 1990.

\bibitem{renshaw}
E.~Renshaw.
\newblock {\em Modelling biological populations in space and time}.
\newblock Cambridge University Press, 1991.

\bibitem{santillan04}
M.~Santill{\'a}n and M.~C. Mackey.
\newblock Why the lysogenic state of phage lambda is so stable: A mathematical
  modeling approach.
\newblock {\em Biophysical Journal}, 86(1):75--84, 2004.

\bibitem{shea_ackers}
M.~A. Shea and G.~K. Ackers.
\newblock The {O}{R} control system of bacteriophage lambda. a
  physical-chemical model for gene regulation.
\newblock {\em Journal of Molecular Biology}, 181(2):211--230, 1985.

\bibitem{pnas08}
F.~St-Perre and D.~Endy.
\newblock Determination of cell fate selection during phage lambda infection.
\newblock {\em PNAS}, 105(52):20705--20710, 2008.

\bibitem{zeng10}
L.~Zeng, S.~O. Skinner, C.~Zong, J.~Sippy, M.~Feiss, and I.~Golding.
\newblock Decision making at a subcellular level determines the outcome of
  bacteriophage infection.
\newblock {\em Cell}, 141(4):682--691, 2010.

\end{thebibliography}
\bibliographystyle{plain}

\end{document}